\documentclass[reqno,11pt]{article} 

\usepackage{amsmath,amsthm,amssymb,amscd,amstext,amsfonts}
\usepackage{mathtools}
\usepackage{mathrsfs}
\usepackage{thmtools}
\usepackage{latexsym}
\usepackage{verbatim}
\usepackage{framed}
\usepackage{graphicx}
\usepackage{stmaryrd}
\usepackage{enumerate}
\usepackage{fullpage}
\usepackage{bm}
\usepackage{color}
\usepackage{hyperref}
\usepackage{url}
\usepackage{physics}
\usepackage{multicol}
\usepackage{multirow}
\usepackage{tikz}
\usepackage{tikz-cd}
\usetikzlibrary{decorations.pathmorphing}   
\usetikzlibrary{arrows}
\usepackage{makeidx}
\usepackage{makecell}		
\usepackage{algorithm}
\usepackage[noend]{algpseudocode}
\usepackage{nicematrix}   
\usepackage{authblk}  
\usepackage{cite}

\newtheorem{theorem}{Theorem}

\newtheorem{corollary}[theorem]{Corollary}

\DeclareMathOperator{\FF}{FirstFit}
\DeclareMathOperator{\CBIP}{CBIP}

\newcommand{\N}{\mathbb{N}}

\begin{document}

\title{$\FF$ online coloring in the random order model}

\author{
    Xinyu Ye\footnote{Email: yxyu613@163.com}, 
    Yuechuan Xu,
    Zixuan Wang, 
    Jiaying Zheng,
    Yaqiao Li\footnote{corresponding author: liyaqiao@suat-sz.edu.cn}\\
    Shenzhen University of Advanced Technology, China
}

\maketitle

\begin{abstract}
    The average performance of FirstFit online coloring on trees in the random order model is completely determined in recent works of Frei et al  and Bosek et al, showing $\Theta(\log n /\log\log n)$ number of colors, improving the $\Theta(\log n)$ colors in the adversarial model. We provide a few further results on slightly more general graph classes. Firstly, We extend their method to obtain a simple path-counting principle for sparse graph classes, which immediately yields for example that cactus graphs and uniform hypertrees exhibit a similar improvement. We then show that FirstFit uses only $O(1)$ colors on crown graphs, a standard example where adversarial arrival forces  $\Theta(n)$  colors. We further show that density alone (even linear minimum degree) is insufficient to guarantee $O(1)$ colors even on bipartite graphs. Finally, we identify graph classes, including unit interval graphs and some graphs of high chromatic number, for which random arrival provides only limited improvement. We end with some open problems.
\end{abstract}

\section{Introduction} \label{sec:intro}

This paper studies the vertex arrival model of online graph vertex coloring. In the adversarial model,  an adversary creates a simple undirected graph 
$G=(V,E)$ of $n$ vertices, and selects a presentation order of vertices $\sigma : [n] \rightarrow V$. The graph is then presented online: at time $i$, vertex $v=\sigma(i)$ arrives, and we learn all its neighbors among already appeared vertices.
We call the set of neighbors of $v$ that an online algorithm learns about at the time of arrival of $v$ as the pre-neighborhood of $v$. 
An online algorithm must declare how to irrevocably color the new vertex $v=\sigma(i)$ prior to the arrival of the next vertex $\sigma(i+1)$. An online algorithm does not know $G$ and even $n$. Alternatively, we can view the online input as a sequence of induced subgraphs:
    $G \cap \sigma([1]), G \cap \sigma([2]), \ldots, G \cap \sigma([n])$.
Given a graph $G$, in the adversarial model, the performance of an online algorithm is measured by its maximum number of colors used among all possible presentation orders. In other words, the adversarial model is for worst case analysis.

An extensively studied online coloring algorithm is a greedy algorithm which is called $\FF$ in the literature: when a vertex $v$ arrives, $\FF$ colors it with the first color that does not appear in its  pre-neighborhood. A sequence of works  \cite{graph_gyarfas1988line, irani1990coloring, albers2021tight} showed that $\FF$ is optimal on trees, inductive graphs, and chordal graphs, etc, using $\Theta(\chi(G) \cdot \log n)$ colors, where $\chi(G)$ is the chromatic number of the graph $G$. Furthermore, it is shown that $\FF$ is also optimal on uniform hypertrees, see \cite{hyperC_1,li2023online}.

For practical purpose, it  would be  desirable to understand the average performance of algorithms. For online coloring, one way of obtaining such understanding is to analyze the \emph{random order model} (aka, random arrival model), in which the adversary still selects the graph, \emph{but the presentation order $\sigma$ is chosen uniformly at random}. While there are many studies on the adversarial model, the random order model is much less understood. Recently, Frei et al \cite{ROM_treecoloring_prediction} and Bosek et al \cite{ROM_FF} studied $\FF$ in the random order model, they showed that on trees (forests), $\FF$ uses $\Theta(\log n /\log\log n)$ colors in expectation, demonstrating that the average performance of $\FF$ is slightly better than the worst case on trees.

This work makes modest contribution in providing several further results in the random order model, demonstrating several  phenomena of some interest. In Section \ref{sec:k-path} and Section \ref{sec:dense}, we exhibit several other graph classes that shows $\FF$ has better, sometimes nearly optimal, performance in the random order model. We extend the  method of Frei et al \cite{ROM_treecoloring_prediction} and Bosek et al \cite{ROM_FF}  to obtain a simple path-counting principle for sparse graph classes, which immediately yields $O(\log n /\log\log n)$ colors for Cactus graphs and uniform hypertrees. However, we point out the limitation of this path-counting technique by providing a simple construction, belonging to several graph classes of interest, on which it fails. 
We then show that FirstFit uses only $O(1)$ colors on crown graphs, a standard example where adversarial arrivals force  $\Theta(n)$  colors.  On the other hand, we show that even in very dense bipartite graphs, $\FF$ still uses $\Omega(\log n /\log\log n)$ colors even in the random order model, i.e.,  being dense alone cannot guarantee $O(1)$ colors for $\FF$ in the random order model. We point out that since some parts of proofs in Section \ref{sec:k-path} are essentially the same as in Frei et al \cite{ROM_treecoloring_prediction} and Bosek et al \cite{ROM_FF}, to avoid repetition, we direct the interested readers to their papers for details.
Lastly, in Section \ref{sec:less_adv}, we show that on certain graphs $\FF$ in the random order model has only limited benefit comparing to the adversarial model. 

Throughout the paper, $n$ always denotes the number of vertices of a graph in the corresponding context, unless explicitly specified otherwise. 

\medskip

\noindent {\bf Acknowledgment.} Y.L. thanks  Vishnu V. Narayan and Denis Pankratov for useful discussion.

\section{Bounded $k$-paths graphs}  \label{sec:k-path}

In this section, we slightly generalize Frei et al \cite{ROM_treecoloring_prediction} and Bosek et al \cite{ROM_FF}, and provide, as examples, two other sparse graph classes   on which $\FF$ also uses $\Theta(\log n /\log\log n)$ colors in expectation in the random order model.

Given a simple graph $G$, a parameter $k \in \N$, a $k$-path in $G$ consists of $k$ vertices $v_1, \ldots, v_k \in V(G)$ together with edges $e_{i,i+1} \in E(G)$ that connects $v_i$ and $v_{i+1}$ for every $i=1, \ldots, k-1$. 
Two $k$-paths are distinct if they differ at least one edge. Let $P_k(G)$ denote the number of distinct $k$-paths in $G$. 

\begin{theorem} \label{thm:path-counting}
    Let $G$ be a simple graph with $n$ vertices. If there exist constants $a,b >0$ such that $P_k(G) \le O(n^a b^k)$ for every relevant $k$, then in the random order model, $\FF$ uses at most $O(\log n /\log\log n)$ colors in expectation on $G$. 
\end{theorem}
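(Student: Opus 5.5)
The plan is to follow the Frei et al./Bosek et al. argument for trees. Their argument has one combinatorial core: when $\FF$ assigns a large color, there must be a long path whose arrival times are monotone. Once that is in place, only a counting step needs to change.

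\textbf{Key structural fact.} If $\FF$ assigns color $c$ to a vertex $v$, then $v$ has a pre-neighbor colored $c-1$. Iterating this gives a path $v=u_c, u_{c-1}, \ldots, u_1$ with two properties: $u_j$ received color $j$, and the vertices arrived in decreasing order along the path (each $u_{j-1}$ arrived before $u_j$). The path is simple because the colors along it are distinct. Hence a vertex colored $c$ yields a $c$-path of $G$ whose vertices appear in the random order $\sigma$ in a prescribed monotone order. For a fixed $k$-path with distinct vertices, a uniformly random order makes its vertices appear in that specific order with probability exactly $1/k!$. Different orientations of the same path are distinct events, but that costs at most a factor of $2$.

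\textbf{Union bound.} Therefore
\[
\Pr[\FF \text{ uses at least } k \text{ colors}] \le \Pr[\exists \text{ monotone } k\text{-path}] \le \frac{2P_k(G)}{k!} \le C\,\frac{n^a b^k}{k!}.
\]
I will choose $k_0 = \lceil \gamma \log n/\log\log n\rceil$ with $\gamma$ a large constant depending on $a$. By Stirling, $k_0! \ge (k_0/e)^{k_0} = \exp\bigl(k_0(\log k_0 - 1)\bigr)$. Since $\log k_0 = (1-o(1))\log\log n$, this exponent is $(\gamma-o(1))\log n$. The factor $b^{k_0} = \exp(O(\log n/\log\log n))$ is negligible. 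So for $\gamma > a+2$, the probability is at most $n^{-2}$ for large $n$.

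\textbf{Expectation.} Write $\mathbb{E}[\#\text{colors}] = \sum_{k\ge1}\Pr[\ge k \text{ colors}]$. Each term with $k\le k_0$ contributes at most $1$, for a total of $k_0$. For $k>k_0$, the ratio between consecutive bounds is $b/(k+1)<1/2$, so the tail is dominated by a geometric series starting below $n^{-2}$. Alternatively, since $\FF$ never uses more than $n$ colors, the tail contributes at most $n\cdot n^{-2}$. Either way, $\mathbb{E}[\#\text{colors}] \le k_0 + o(1) = O(\log n/\log\log n)$.

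\textbf{Main obstacle.} The delicate step is interpreting ``for every relevant $k$'' and checking that the paper's definition of $k$-paths matches the chains produced by $\FF$. The chain is a genuinely simple path, since distinct colors force distinct vertices, so it is counted in $P_k(G)$. I only need the hypothesis for $k$ of order $\log n/\log\log n$, where the constant hidden in $O(n^ab^k)$ must be uniform in $k$. I will state that assumption explicitly. The remaining work is the Stirling estimate, which is where the $\log\log n$ saving over the adversarial $\log n$ comes from. Everything else matches the tree argument of Frei et al./Bosek et al., to which I would refer for details.
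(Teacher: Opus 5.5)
Your proposal is correct and follows essentially the same route as the paper: the chain of pre-neighbors colored $1,\dots,k$ forming a monotone $k$-path, the $1/k!$ probability per orientation, the union bound $2P_k(G)/k!$, and the estimate $\mathbb{E}[\#\text{colors}] \le (k-1) + n\Pr[\text{at least } k \text{ colors}]$, which is your ``alternative'' tail bound. You additionally carry out the Stirling computation with $k_0=\lceil \gamma\log n/\log\log n\rceil$ and $\gamma>a+2$, which the paper leaves as ``a simple calculation.'' You also rightly note that the constant hidden in $O(n^a b^k)$ must be uniform in $k$.
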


\begin{proof}
    The proof is essentially the same as in \cite{ROM_treecoloring_prediction,ROM_FF}. We briefly sketch its main steps below. If a vertex $v$ in $G$ is colored by $k$ by $\FF$, then necessary conditions are that  there exists a $k$-path $u_1\ldots u_{k-1} v$ that ends at $v$, and such that $\FF$ assigns color $i$ to vertex $u_i$, and these $k$ vertices appear in exactly the \emph{relative} order as $u_1, \ldots, u_{k-1}, v$. Since vertices appear in uniform random order,  the probability that the $k$ vertices $u_1, \ldots, u_{k-1}, v$ appear in exactly this relative order is precisely $1/k!$. Let $E$ denote the probabilistic event that in a random order, there  exists a $k$-path $u_1\ldots u_{k-1} v$ that ends at $v$  and the $k$ vertices appear in exactly that relative order. Then, 
    \[
        \Pr[\exists\ v, \FF \text{ uses color $k$ on $v$}]
        \le \Pr[E] \le \frac{2P_k(G)}{k!},
    \]
    where the number $2$ denotes that $v$ can be chosen as one of the two ending vertices of a $k$-path. 
    Hence, the expected number of colors on $G$ used by $\FF$ is at most
    \begin{align*}
        (k-1) + n \cdot \Pr[\FF \text{ uses at least $k$ colors}] 
        &\le (k-1)  + n \cdot \Pr[\exists\ v, \FF \text{ uses color $k$ on $v$}]  \\
        &\le (k-1) + n \cdot \frac{2P_k(G)}{k!} \\
        &\le (k-1) + n \cdot \frac{O(n^a b^k)}{k!}.
    \end{align*}
    By optimizing with $k$, a simple calculation shows that this is $O(\log n / \log\log n)$ as desired.
\end{proof}

The $O(\log n/\log\log n)$ upper bound for trees follows from the observation that  if $G$ is a tree, then $P_k(G) \le O(n^2)$. We can similarly obtain this lower bound for some other graph classes, two examples are given here. 
\begin{itemize}
    \item Cactus graphs. In a Cactus graph, each edge is contained in at most one cycle. The definition immediately implies that if $G$ is a cactus graph, then $P_k(G) \le O(n^2 2^k)$, because you have only 2 choices along each cycle when forming a path. 
    \item Hypertrees. In a hypergraph, a Berge path of length $k$ is an alternating sequence of vertices and hyperedges: $v_1, e_1, v_2, e_2, \dots, v_k$ such that all vertices $v_i$ are distinct and all hyperedges  $e_i$  are distinct, and such that for each $i \in \{1, \dots, k-1\}$, the hyperedge $e_i$ contains both $v_i$ and $v_{i+1}$.  A Berge cycle can be similarly defined except that the last vertex coincides with the first vertex. A hypergraph is called a (Berge) hypertree if it is connected and does not contain any Berge cycle. Because of this (Berge) cycle-free property, the number of $k$-paths (in terms of Berge path) in a hypertree is $O(n^2)$ just like the standard trees. A hypergraph is called $s$-uniform if its every hyperedge contains exactly $s$ vertices. Hypergraph coloring is defined similarly as graph coloring: requiring that no hyperedge is monochromatic.
\end{itemize}

Theorem \ref{thm:path-counting} then implies the following. 

\begin{corollary}   \label{cor:cactus_hypertree}
    For  cactus graphs,   $\FF$ uses in expectation $\Theta(\log n /\log\log n)$ colors in the random order model. For $s$-uniform hypertrees,  $\FF$ uses in expectation $\Theta_s(\log n /\log\log n)$ colors in the random order model. 
\end{corollary}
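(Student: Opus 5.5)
The corollary has two halves: an upper bound and a matching lower bound, for each of the two classes.

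\textbf{Upper bound for cactus graphs.} I would apply Theorem~\ref{thm:path-counting} directly. The only thing to verify is $P_k(G)\le O(n^2 2^k)$. A $k$-path is determined by its two endpoints together with the route it takes through the blocks between them. In a cactus every block is either a bridge or a cycle. The block-cut tree is a tree, so the sequence of blocks a path visits between two fixed endpoints is forced. Inside each cycle block the path has two choices, clockwise or counterclockwise, and it visits at most $k$ cycles. This gives at most $n^2 2^k$ such paths, and Theorem~\ref{thm:path-counting} yields $O(\log n/\log\log n)$.

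\textbf{Upper bound for hypertrees.} The argument has to be rerun in hypergraph language. If $\FF$ gives color $k$ to $v$, then for each smaller color $i$ some hyperedge containing $v$ is monochromatic in color $i$ apart from $v$. Following the chain backwards yields a Berge path $u_1,e_1,\dots,u_{k-1},e_{k-1},v$ with $\FF$-colors $1,\dots,k$ whose vertices arrive in this relative order. One should check that the chain can be chosen with distinct vertices and edges: colors strictly decrease along it, so vertices repeat nowhere, and in a Berge-acyclic hypergraph edges cannot repeat either. The probability bound $1/k!$ then goes through as before. Berge-acyclicity makes the path between two vertices unique as a vertex/edge sequence up to the choice of intermediate vertices. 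Since two hyperedges share at most one vertex, the number of $k$-Berge paths is $O(n^2)$, or at worst $O(n^2 s^{k})$, which is still of the required form.

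\textbf{Lower bound.} Both classes contain all trees: a tree is a cactus, and every tree is an $s$-uniform hypertree after padding each edge with $s-2$ fresh private vertices. So I would invoke the $\Omega(\log n/\log\log n)$ lower bound for trees from \cite{ROM_treecoloring_prediction,ROM_FF}. The hypergraph case needs care: the padding vertices must not spoil the forcing. The intended tree structure forces colors only when the padding vertices of a hyperedge arrive before both endpoints and receive the same color. I would handle this by showing that padding vertices are isolated apart from their single hyperedge and therefore get color $1$ when they arrive early. The event that every padding vertex of a hyperedge precedes its tree endpoints has constant probability depending only on $s$. Adjusting the tree construction of \cite{ROM_FF} by an $s$-dependent factor should keep the bound $\Omega_s(\log n/\log\log n)$. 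I expect this hypergraph lower-bound transfer to be the main obstacle; the upper bounds are routine.
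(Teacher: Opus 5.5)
Your upper bounds and the cactus lower bound follow the paper: Theorem~\ref{thm:path-counting} with the stated path counts, and trees being cacti. Rerunning the chain argument explicitly with Berge paths is a useful addition to what the paper only asserts. One small correction: edge-distinctness of the chain comes from the coloring, not from acyclicity. The hyperedge $e_i$ consists of $u_{i+1}$ together with $s-1$ vertices of color $i$, so it cannot contain $u_{j+1}$ for any $j\neq i$.

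\textbf{The gap is the lower bound for $s$-uniform hypertrees with $s\ge 3$.} Padding each tree edge with $s-2$ private vertices does not merely weaken the forcing; it destroys it in every arrival order. Your own observation that early padding vertices get color $1$ is what breaks the construction.
\begin{itemize}
\item A private vertex $p$ of a hyperedge $e$ can only be constrained by $e$, and only if $p$ is the last vertex of $e$ to arrive. Otherwise $p$ receives color $1$.
\item For $e$ to block color $i$ at a tree vertex $v\in e$, all other vertices of $e$, including $p$, must have arrived before $v$ and carry color $i$.
\item But then $p$ was not last in $e$, so $p$ has color $1$, which forces $i=1$.
\end{itemize}
Hence every tree vertex is blocked from at most color $1$. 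So $\FF$ uses at most $2$ colors on every padded tree, whatever the order. No $s$-dependent rescaling of the tree from \cite{ROM_FF} can repair this. The obstacle is not the probability that padding vertices arrive early; it is that a private vertex can never carry a high color.

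\textbf{The missing idea.} Each of the $s-1$ other vertices in a forcing hyperedge must itself be forced to the required color. So each of them must be the root of its own recursive gadget. This is the paper's construction:
\begin{itemize}
\item $H^r_i$ takes $r(s-1)$ copies of every $H^r_j$ with $j<i$ and splits them into $r$ groups of size $s-1$.
\item A new root $v$ is joined to the $s-1$ roots of each group by a single $s$-uniform hyperedge.
\end{itemize}
Then $n=O((sr)^k)$. With $r=\Theta(k\log k)$, the argument of \cite{ROM_FF} gives color $k$ with probability $\Omega(1)$, and hence the bound $\Omega_s(\log n/\log\log n)$.
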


\begin{proof}
    The upper bounds follow directly from Theorem  \ref{thm:path-counting}  and the bounds on $k$-paths as given above. Since trees is a subset of cactus graphs, the lower bound $\Omega(\log n /\log\log n)$ established by Bosek et al \cite{ROM_FF} also holds for cactus graphs. 
    
    Furthermore, observe that the lower bound proof in \cite{ROM_FF} can be directly adapted to work for $s$-uniform hypertrees. We only point out the construction of the rooted hypertree which is a direct generalization of \cite{ROM_FF}(see their paper Section 3, Figure 1). Let $H^r_1$ be a single vertex. Construct the rooted hypertree $H^r_i$ recursively as follows: make $r(s-1)$ copies of  $H^r_{j}$ for every $1 \le j \le i-1$ , view the $r(s-1)$ copies of  $H^r_{j}$ as $r$ groups, so each group has size $s-1$. Create a single new vertex $v$, and connect $v$ to the $s-1$ root vertices of each group  to form an $s$-uniform hyperedge. Let $v$ be the root vertex of $H^r_i$.
    The rest of the proof works in the same way. We refer interested readers to \cite{ROM_FF} for details. The number of vertices of $H^r_{k}$ is $n = O((sr)^k)$ instead of $O(r^k)$. Choosing $r= \Theta(k\log k)$, the proof in \cite{ROM_FF}  shows the probability that $\FF$ uses color $k$ on $H^r_{k}$ has probability at least $\Omega(1)$, which implies the lower bound $\Omega_s(\log n /\log\log n)$.
\end{proof}

\begin{figure}[htbp!]
    \centering
\begin{tikzpicture}[
    cut-node/.style={circle, draw=black, fill=black, inner sep=0pt, minimum size=6pt},
    mid-node/.style={circle, draw=black, fill=gray, inner sep=0pt, minimum size=4pt},
    edge/.style={thick, darkgray}
]

\node[cut-node, label=below:{$v_1$}] (v1) at (0, 0) {};
\node[cut-node, label=below:{$v_2$}] (v2) at (3, 0) {};
\node[cut-node, label=below:{$v_3$}] (v3) at (6, 0) {};
\node[cut-node, label=below:{$v_4$}] (v4) at (9, 0) {};

\foreach \i [evaluate=\i as \nexti using int(\i+1)] in {1, 2, 3} {
    
    \draw[edge] (v\i) to[bend left=45] (v\nexti);

    \foreach \y in {1.5, 0, -1.5} {
        \node[mid-node] (u) at (\i*3 - 1.5, \y) {};
        
        \draw[edge] (v\i) -- (u) -- (v\nexti);
    }
}
\end{tikzpicture}
    \caption{The graph $G_{4,3}$, it is series–parallel, of treewidth 2, is 2-inductive and planar.} \label{fig:unbounded_k_path}
\end{figure}

Unfortunately, inductive graphs and bounded treewidth graphs do not necessarily satisfy the bounded $k$-paths property in Theorem \ref{thm:path-counting}. A simple example is illustrated in Figure \ref{fig:unbounded_k_path}, where the graph $G_{4,3}$ can be directly generalized to $G_{p,q}$ for any $p, q$: there are $p$ vertices $v_1, \ldots, v_p$ forming a path, and between every two consecutive $v_i$ and $v_{i+1}$, one adds $q$ vertices and connect all of them to both $v_i$ and $v_{i+1}$. One can directly check that the graph $G_{p,q}$:
\begin{itemize}
    \item is series–parallel,
    \item has treewidth 2, 
    \item is 2-inductive, 
    \item is planar.
\end{itemize}
The number of vertices of $G_{p,q}$ is $n= p+q(p-1) = \Theta(pq)$. However, $P_k(G_{p,q}) \ge \Omega(q^k)$. Choosing $p=q$, we would have $n= \Theta(p^2)$ and $P_k(G_{p,q}) \ge \Omega(n^{k/2})$ which grows too fast  to apply Theorem \ref{thm:path-counting}.

\section{Dense bipartite graphs}    \label{sec:dense}

Theorem \ref{thm:path-counting} mainly applies to sparse graph classes. We next show that sparsity is not the only mechanism leading to good random-order behavior. Intuitively, dense bipartite graphs seem also easy for $\FF$ because of its high connectivity. Below we exhibit two such classes with different phenomena.

\subsection{The crown graph}    \label{sec:crown}

A crown graph on $2n$ vertices, call it $C=(U,V;E)$, is an undirected graph with two sets of vertices 
    $U= \{u_1, u_2, \ldots, u_n\}$ 
and 
    $V= \{v_1, v_2, \ldots, v_n\}$ 
and with an edge from $u_i$ to $v_j$ whenever 
    $i \neq j$.
In the adversarial online graph coloring model, the crown graph is the typical graph that, when the vertices are given in the order 
    $u_1, v_1, u_2, v_2, \ldots$, 
forces $\FF$ to use $n$ colors. In other words, the crown graph is a witness for the poorest performance of $\FF$ in the worst case scenario. It is then a natural question to understand how $\FF$ performs in the random order model.

\begin{theorem} \label{thm:crown_graph}
    Let $C=(U,V;E)$ be a crown graph on $2n$  vertices, where $|U|=|V|=n$. Then, in the random order model $\FF$ uses at most $2.5 + O(1/n)$ colors in expectation. 
\end{theorem}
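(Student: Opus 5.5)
The plan is to exploit the very rigid structure of the crown graph. A vertex $u_i$ is adjacent to every vertex of $V$ except its twin $v_i$, and symmetrically for $v_j$. So $\FF$ can be forced past $2$ colors only by the "twin" phenomenon that drives the adversarial order $u_1,v_1,u_2,v_2,\ldots$. I would first prove a deterministic structural lemma describing $\FF$ on $C$ under an arbitrary order, and then bound the probability that a uniformly random order triggers the twin phenomenon repeatedly.

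\emph{Step 1 (structure).} Suppose, by symmetry, that the first vertex is $u_a$, which receives color $1$. A vertex $v_j$ receives color $1$ only if every previously arrived $U$-vertex of color $1$ equals $u_j$. Hence some $V$-vertex gets color $1$ only if $v_a$ arrives before any other $U$-vertex. In that case $v_a$ is the unique $V$-vertex of color $1$, because every other $v_j$ sees $u_a$. Otherwise every $U$-vertex is colored $1$ and every $V$-vertex is colored $2$, so exactly $2$ colors are used (or $1$ if one side never appears, which is impossible here since $|U|=|V|=n$).

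More generally, I would prove by induction the following claim. Color $c$ appears on both sides only if there is a twin pair $\{u_i,v_i\}$ both colored $c$. Moreover, reaching color $2m+1$ or $2m+2$ requires twin pairs $\{u_{i_1},v_{i_1}\},\ldots,\{u_{i_m},v_{i_m}\}$ colored $1,\ldots,m$ respectively. Each pair must complete, meaning its second member arrives, before any further "non-twin" vertex of the appropriate side has arrived. This is the formal version of the adversarial witness.

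\emph{Step 2 (probability).} Using the structural lemma, I would bound
\[
\Pr[\FF \text{ uses at least } 2m+1 \text{ colors}]
\]
by a product of conditional probabilities. Given the history, the twin of the current "pivot" vertex must beat roughly $n-O(m)$ competing vertices of the opposite side, which has probability about $1/(n-O(m))$. Summing over the choices of pivots, this yields a bound of order $O(1/n)$ for $m=1$ and $O(1/n^{m})$, up to combinatorial factors, in general. Since at most $2n$ colors are ever used, I would then write
\[
\mathbb{E}[\#\text{colors}] \le 2 + \sum_{m\ge 1} 2\,\Pr[\text{at least } 2m+1 \text{ colors}].
\]
The leading term is $2$. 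The contribution from $m=1$ is $O(1/n)$, and the tail is negligible. A generous accounting of the constants, for example bounding crudely the probability that one side's first few arrivals interleave badly with the other side, gives $2.5+O(1/n)$, comfortably within the claimed bound.

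\emph{Main obstacle.} The delicate part is Step 1's induction. After a twin pair of color $1$ appears, I must show exactly which vertices can receive colors $2$, $3$, and so on, and that a new color on both sides again requires a fresh twin pair completing in a constrained relative order. Only then can the events in Step 2 be chained with a clean conditional-probability bound. I would first try to state the invariant as follows: at every time, each color class meets both $U$ and $V$ only inside a single twin pair. I would then check that the next arriving vertex's color is determined by which twin-pair colors its pre-neighborhood misses. If this becomes messy, I would fall back on a cruder union bound over ordered sequences of twin pairs, in the spirit of the path-counting argument of Theorem~\ref{thm:path-counting}. That union bound already gives exponentially small tails once $n$ is large.
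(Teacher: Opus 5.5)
Your route differs from the paper's and would actually give a sharper bound, but as written its central step is missing. Your level-one analysis is correct. On its own, though, it only yields $\Pr[\text{at least 3 colors}]\le 1/n$, hence $\mathbb{E}\le 2+(n-2)/n<3$. Getting any bound below $3$ therefore depends on the claim that higher colors require successive twin pairs, and you do not prove that claim. Your proposed invariant cannot supply it. That a two-sided color class is a single twin pair is automatic, since any $u_i,v_j$ with $i\ne j$ are adjacent, and it says nothing about which colors later vertices receive.

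The missing idea is a self-reduction. Suppose $u_a$ arrives first and $v_a$ precedes all of $U\setminus\{u_a\}$. Then every other vertex has an earlier neighbor of color $1$: $u_a$ if it lies in $V$, and $v_a$ if it lies in $U$. By induction on arrival time, each remaining vertex therefore receives $1$ plus its $\FF$ color on the crown $C_{n-1}$ spanned by $(U\setminus\{u_a\})\cup(V\setminus\{v_a\})$ with the induced order. Iterating, the number of colors is $t+2$ (or $n$), where $t$ is the number of consecutive levels at which a twin pair forms. This is stronger than your ``$2m+1$ colors need $m$ pairs''.

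For Step 2, do not condition on the event $A_j$ (twin pairs at colors $1,\dots,j$) and then restart. Given $A_1$, the induced order on the residual crown is not uniform; it favors orders that begin with several vertices of the twin's side. Instead, condition on the history up to the arrival of the $j$-th twin. The unarrived vertices are then uniformly ordered. The residual vertices already present all lie on one side, and two or more of them rule out level $j+1$. Otherwise the next pivot's twin must precede the other $n-j-1$ residual vertices on the pivot's side, which has conditional probability $1/(n-j)$. This gives $\Pr[\text{at least } k+2 \text{ colors}]\le (n-k)!/n!$ and $\mathbb{E}\le 2+1/n+O(1/n^2)$, so no ``generous accounting'' is needed to reach $2.5$.

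Your fallback would not replace this argument. It still presupposes the structural lemma. Moreover, a union bound that records only the relative order of the $2m$ twin vertices, as in Theorem~\ref{thm:path-counting}, grows like $n^m/(2m)!$ and is useless for bounded $m$. The factor $1/(n-j)$ comes precisely from the twin beating all remaining same-side vertices, so that constraint must be kept.

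By contrast, the paper needs no induction. It splits $\sigma$ into maximal one-sided blocks and shows that $\FF$ uses $2$ colors unless $t_1=1$ and $\overline{x}\in T_2$, and at most $3$ when additionally $t_2\ge 3$. In the remaining case $t_1=1$, $t_2\le 2$, $\overline{x}\in T_2$, which has probability about $1/(2n)$, it uses the trivial bound $n$ from $\Delta(C)=n-1$. That trivial bound is exactly where the extra $0.5$ comes from.
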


\begin{proof}
    For every 
        $u \in U$,
    let 
        $\overline{u} \in V$
    denote the unique vertex in $V$ for which there is no edge connecting $u$ and $\overline{u}$. 
    Let 
        $\sigma$
    be a permutation of $V(C) = U\cup V$.
    Partition $\sigma$ into consecutive blocks, where each block is a subset of $U$ or a subset of $V$. In this way, we write
        \[
            \sigma = T_1 \circ T_2 \circ T_3 \cdots, 
            \quad
            T_1, T_3, \cdots \subseteq U, 
            \quad
            T_2, T_4, \cdots \subseteq V. 
        \]
    Let
        $t_i = |T_i|$.
    We discuss below the case $T_1 \subseteq U$. The case  $T_1 \subseteq V$ is similar. 
    By abuse of notation we use $\FF$ to denote the number of colors used.
    \begin{enumerate}[(1)]
        \item $t_1 \ge 2$. This implies $\FF = 2$. 
        
        This is because in this case every vertex in $T_2$ will get color $2$, and every vertex in $T_3$ will get color $1$, etc. 
        
        \item $t_1 = 1, t_2 \ge 3$. This implies $\FF \le 3$.
        Let $x$ be the unique vertex in $T_1$. 
        If $\overline{x}\notin T_2$, then every vertex in $T_2$ receives color $2$, and the subsequent blocks alternate between colors $1$ and $2$. If $\overline{x}\in T_2$, then $\overline{x}$ receives color $1$ and the other vertices in $T_2$ receive color $2$. Since $t_2\ge3$, every subsequent vertex in $U$ sees both colors $1$ and $2$ and receives color $3$, while every subsequent vertex in $V$ receives color $2$. Thus, $\FF$ uses $3$ colors if and only if $\overline{x}\in T_2$. This requires $\overline{x}$ to appear before every vertex in $U\setminus\{x\}$, so this occurs with probability at most $1/n=O(1/n)$.
        
        \item $t_1 = 1, t_2 \le 2$, and $\overline{x} \not\in T_2$, where $x$ is the unique vertex in $T_1$. This implies $\FF = 2$.
        
        In this case, $T_2$ receives only color $2$ and $T_3$ only color $1$; the same alternation continues thereafter.

        \item $t_1 = 1, t_2 \le 2$, and $\overline{x} \in T_2$, where $x$ is the unique vertex in $T_1$. 

        Let $E_4$ denote the event that this case happens. We have,
        \begin{align*}
            \Pr[E_4] 
            &\le 
            \Pr[t_1=1, \text{ and $\overline{x}$ appears in the first two vertices of $T_2$}] \\
            &= \frac{2(n-1)}{(2n-1)(2n-2)} 
            = \frac{1}{2n-1}.
        \end{align*}
        Also, $\FF$ uses at most $n$ colors because the graph has maximum degree 
            $\Delta(C) = n-1$. 
    \end{enumerate}

    To sum up, the number of colors used is either 2, or 3, or at most $n$. Hence, the expected number of colors of $\FF$ is
        $\le 2 + 3 \cdot O(\frac{1}{n}) + n \cdot \frac{1}{2n-1} \le 2.5 + O(\frac{1}{n})$.
\end{proof}

\subsection{Linear minimum degree}  \label{sec:min_deg}

On the other hand, below we show that edge density, even linear minimum degree, is insufficient to guarantee a $O(1)$ upper bound for $\FF$. The lower bound construction is a suitable blow up of the crown graph.

\begin{theorem} \label{thm:linear_min_degree}
    There exists a bipartite graph of $n$ vertices with minimum degree $\Omega(n)$ such that $\FF$ uses $\Omega(\log n/\log\log n)$ colors in expectation in the random order model.
\end{theorem}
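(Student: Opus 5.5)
The plan is to glue a sparse tree gadget, on which $\FF$ is already known to be slow, to a dense complete bipartite "background". The background supplies every vertex with linear degree, but in a random order it quickly locks into a fixed $2$-coloring, so it only removes one color from each side of the tree.

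\textbf{Construction.} Let $H=H^r_k$ be the rooted tree from the lower bound of Bosek et al \cite{ROM_FF} (the $s=2$ case of the construction recalled in the proof of Corollary \ref{cor:cactus_hypertree}). It has $m$ vertices, and on it $\FF$ in random order reaches color $k=\Theta(\log m/\log\log m)$ with probability $\Omega(1)$. Let $(A,B)$ be its bipartition. Add two new independent sets $X,Y$ with $|X|=|Y|=M:=m^2$. Put in all edges between $X$ and $Y$, between $A$ and $Y$, and between $B$ and $X$. Keep the tree edges between $A$ and $B$.

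The resulting graph $G$ is bipartite with sides $A\cup X$ and $B\cup Y$, and it has $n=2M+m=\Theta(m^2)$ vertices. Every vertex has at least $M=\Omega(n)$ neighbors. Since $\log n=\Theta(\log m)$, a bound of $\Omega(\log m/\log\log m)$ colors is also $\Omega(\log n/\log\log n)$.

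\textbf{Step 1: the background freezes.} I first restrict to a good event $\mathcal{G}$. On $\mathcal{G}$, the first two vertices of the random order both lie in $X$, or both lie in $Y$, and no vertex of $A\cup B$ appears before at least one vertex of $X$ and one vertex of $Y$ have arrived. By a union bound, $\Pr[\mathcal{G}]\ge 1-O(m/M)=1-O(1/m)$.

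Assume by symmetry that the order starts with two vertices of $X$. As in case (1) of the proof of Theorem \ref{thm:crown_graph}, every $X$-vertex then gets color $1$ and every $Y$-vertex gets color $2$. An $X$-vertex only sees colors $2$ (from $Y$) and possibly tree colors from $B$, so it still takes color $1$; likewise a $Y$-vertex takes color $2$, provided no tree vertex has already taken color $1$ or $2$ on the wrong side. This proviso needs checking. After the freeze, every $a\in A$ has a $Y$-neighbor colored $2$, so it never uses color $2$. Every $b\in B$ has an $X$-neighbor colored $1$, so it never uses color $1$. A tree vertex in $A$ may take color $1$, but its only neighbors in the background are in $Y$, and those are colored $2$. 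So there is no conflict, and by induction on arrival time the background colors stay frozen.

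\textbf{Step 2: the tree runs a shifted $\FF$.} On $\mathcal{G}$, the colors of $A\cup B$ are therefore produced by a modified $\FF$ on $H$. Vertices of $A$ pick the least color in $\{1,3,4,\dots\}$ not used by earlier neighbors, and vertices of $B$ pick the least color in $\{2,3,4,\dots\}$. The induced order on $V(H)$ is uniform, and it is independent enough of $\mathcal{G}$: conditioning on the positions of the background vertices leaves the relative order of $H$ uniform.

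It remains to rerun the argument of \cite{ROM_FF} for this modified rule. Their proof is an induction: a vertex at level $i$ receives color $\ge i$ with good probability, because with high probability it has, among its $r$ children of each level $j<i$, one of color $\ge j$ that arrives before it. I will prove the analogous statement with color $\ge i-1$. The key point is monotonicity: with one color forbidden, the color chosen is at least the ordinary $\FF$ choice minus $1$, and blocking sets behave the same way. So the inductive invariant needs to lose only an additive constant, not accumulate it.

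This adaptation is the step I expect to be the main obstacle. If the additive losses accumulate level by level, the fallback is to replace each tree vertex by the statement "color $\ge i$ in the palette order restricted to its side". This is a genuine greedy coloring with a per-side palette of the same structure, so the original proof should apply verbatim after relabelling colors.

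\textbf{Conclusion.} Combining the two steps, with probability $\Omega(1)-O(1/m)$ the algorithm uses $\Omega(k)=\Omega(\log n/\log\log n)$ colors. Hence the expected number of colors is also $\Omega(\log n/\log\log n)$.
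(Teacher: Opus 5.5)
\textbf{There is a genuine gap, and it is in the construction itself, not in adapting Bosek et al.}

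In your graph every vertex of $X$ is adjacent to the whole opposite side $B\cup Y$. Likewise every vertex of $Y$ is adjacent to the whole opposite side $A\cup X$. Now suppose the very first arrival $w$ lies in $X\cup Y$, which has probability $1-m/n=1-O(1/m)$. Then $w$ gets colour $1$ and is adjacent to every vertex on the other side. By induction on arrival time, FirstFit simply outputs the bipartition:
\begin{itemize}
\item a vertex on $w$'s side has all its neighbours on the other side, all coloured $2$, so it takes $1$;
\item a vertex on the other side has all its neighbours on $w$'s side, all coloured $1$ and including $w$, so it takes $2$.
\end{itemize}
Hence FirstFit uses exactly two colours with probability $1-O(1/m)$, and the tree gadget never gets going.

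You can see the same collapse in your own Step 2. With palette $\{1,3,4,\dots\}$ on $A$ and $\{2,3,4,\dots\}$ on $B$, a vertex of $A$ is blocked from colour $1$ only by a $B$-neighbour of colour $1$, which never exists. Symmetrically, every vertex of $B$ takes colour $2$. So the heuristic ``forbidding one colour costs at most one'' is false here. The fallback fails for the same reason: the per-side palettes are not independent relabellings, because each side's first colour is exactly the colour forbidden on the other side. (Separately, requiring the first two arrivals to lie in the same part of $X\cup Y$ has probability about $1/2$, not $1-O(m/M)$; that condition is unnecessary anyway.)

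\textbf{The missing idea.} The dense padding must forbid the \emph{same} colour on both sides, so that it acts as a uniform shift of the palette. This is what the paper's padding does. There, $\hat A_k^{(k)}$ is joined to every $B_k^{(j)}$ and $\hat B_k^{(k)}$ to every $A_k^{(j)}$, just like your $Y$ and $X$. But $\hat A_k^{(k)}$ and $\hat B_k^{(k)}$ are not joined to each other, so on the event considered both receive colour $1$. The rest is then coloured by FirstFit on $\{2,3,\dots\}$; had the two been joined, the collapse above would occur. The paper's gadget is also different from yours: a crown blow-up with parts of size $k^{j-1}$ and an explicit ordering event of probability $\Omega(1)$, rather than the tree of Bosek et al.

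\textbf{How to repair your approach.} Delete the $X$--$Y$ edges. Suppose the first two arrivals are one vertex of $X$ and one of $Y$. Then all of $X\cup Y$ keeps colour $1$ forever, since their neighbours are tree vertices, each of which sees a colour-$1$ background vertex. Each tree vertex takes $1$ plus its ordinary FirstFit colour in the tree, and the relative order of tree vertices is still uniform. So the result of Bosek et al applies verbatim, with no rerun of their induction.

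The price is that an $X$-vertex now has degree only $|B|$. You therefore need $|X|=|Y|=\Theta(m)$ and a forest with balanced bipartition, e.g.\ two copies of $H^r_k$ with sides swapped. The good event then has probability $\Omega(1)$ rather than $1-o(1)$, which suffices.
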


\begin{proof}
    We will construct a desired graph with vertices $n=O(k^k)$ and minimum degree $\Omega(n)$, such that with $\Omega(1)$ probability $\FF$ uses at least $k$ colors. 
    
    Let $G_k$ be constructed as follows, which is a suitable blow up of the crown graph. Let $H_k = \{A_k^{(1)}$, $A_k^{(2)}$, $\ldots$, $A_k^{(k-1)}$, $A_k^{(k)}$, $B_k^{(1)}$, $B_k^{(2)}$, $\ldots$, $B_k^{(k-1)}$, $B_k^{(k)}\}$ be a collection of $2k$ disjoint sets of vertices, where $|A_k^{(j)}| = |B_k^{(j)}| = k^{j-1}$. The vertex set of $G_k$ is the union of all sets in $H_k$. We add an edge in $G_k$ between all pairs $(u,v)$ where $u\in A_k^{(i)}$, $v\in B_k^{(j)}$ and $i \neq j$. Since $\bigcup_{j\in[k]} A_k^{(j)}$ and $\bigcup_{j\in[k]} B_k^{(j)}$ are independent, $G_k$ is bipartite. Let $n$ denote $|V(G_k)|$ and suppose that the vertices of $G_k$ are labelled arbitrarily with $[n]$. For each $i\in[k]$, let $G_k^{(i)}$ be the induced subgraph of $G_k$ on $\bigcup_{j\in[i]} (A_k^{(j)}\cup B_k^{(j)})$. We have $|V(G_k^{(i)})| = 2\sum_{j=1}^{i}k^{j-1} = 2\frac{k^i-1}{k-1} \leq \frac{2k^i}{k-1}$, and we have $n = |V(G_k^{(k)})| = 2\frac{k^k-1}{k-1} = \Theta(k^{k-1})$.

    Next, we show that the probability that $\FF$ uses $k$ colors on $G_k$ is at least $\Omega(1)$. Let $\sigma_n$ be a uniform permutation of $[n]$. For any $i \in [n]$, denote by $\sigma_n(i)$ the position of $i$ in $\sigma_n$. Let $f(A_k^{(j)}) = \min_{u\in A_k^{(j)}} \sigma_n(u)$ and define $f(B_k^{(j)})$ analogously for all $j \in [k]$. Let $x_k^{(j)} = \min(f(A_k^{(j)}),f(B_k^{(j)}))$ and $y_k^{(j)} = \max(f(A_k^{(j)}),f(B_k^{(j)}))$. Now, consider the event 
    \begin{equation}    \label{eq:E2}
        x_k^{(k)} < y_k^{(k)} < x_k^{(k-1)} < y_k^{(k-1)} < \cdots < x_k^{(1)} < y_k^{(1)}
    \end{equation}
    It is easy to verify that $\FF$ uses $k$ colors in this event. The probability of this event is at least
\begin{align*}
    &\left(\frac{|A_k^{(k)}\cup B_k^{(k)}|}{|V(G_k^{(k)})|}\right) \left(\frac{|B_k^{(k)}|}{|V(G_k^{(k)})\setminus A_k^{(k)}|}\right) \left(\frac{|A_k^{(k-1)}\cup B_k^{(k-1)}|}{|V(G_k^{(k-1)})|}\right) \left(\frac{|B_k^{(k-1)}|}{|V(G_k^{(k-1)})\setminus A_k^{(k-1)}|}\right) \\
    &\qquad\qquad\cdots\left(\frac{|A_k^{(1)}\cup B_k^{(1)}|}{|V(G_k^{(1)})|}\right) \left(\frac{|B_k^{(1)}|}{|V(G_k^{(1)})\setminus A_k^{(1)}|}\right) \\
    &= \prod_{i=0}^{k-1} \left(\frac{|A_k^{(k-i)}\cup B_k^{(k-i)}|}{|V(G_k^{(k-i)})|}\right) \left(\frac{|B_k^{(k-i)}|}{|V(G_k^{(k-i)})\setminus A_k^{(k-i)}|}\right) \\
    &\geq \prod_{i=1}^{k} \left(\frac{2k^{k-i}}{\frac{2k^{k-i+1}}{k-1}}\right)\left(\frac{k^{k-i}}{\frac{2k^{k-i+1}}{k-1}-k^{k-i}}\right).
\end{align*}
Since 
    $\frac{2k^{k-i}}{\frac{2k^{k-i+1}}{k-1}} = \frac{k-1}{k}$
and 
    $\frac{k^{k-i}}{\frac{2k^{k-i+1}}{k-1}-k^{k-i}} = \frac{k-1}{k+1}$, 
the above product is at least 
    $\prod_{i=1}^{k} \left(\frac{k-1}{k}\right)\left(\frac{k-1}{k+1}\right)
    = \left(\frac{k-1}{k}\right)^k\left(\frac{k-1}{k+1}\right)^k \ge \Omega(1)$.

    Note that the number of vertices of $G_k$ is $n = \Theta(k^{k-1})$, however, the minimum degree of $G_k$ is not $\Omega(n)$ since for example, the vertices in $A_k^{(k)}$ has degree $O(k^{k-2}) = O(n/k)$. To fix this, we add another copy of $A_k^{(k)}$ and $B_k^{(k)}$, respectively, let us denote them by $\hat{A}_k^{(k)}$ and $\hat{B}_k^{(k)}$. We connect every vertex in $\hat{A}_k^{(k)}$ to all vertices in $B_k^{(j)}$ for all $1 \le j\le k$, and  similarly connect every vertex in $\hat{B}_k^{(k)}$ to all vertices in  $A_k^{(j)}$ for all $1 \le j\le k$. Let $\hat{G}_k$ denote the new graph. By an abuse of notation we still use $n$ to denote the number of vertices of $\hat{G}_k$, which is still $n = \Theta(k^{k-1})$. Furthermore, now every vertex in $\hat{G}_k$ has minimum degree $\Omega(n)$. 
    Now, consider a similar event ${\hat{x}}_k^{(k)} < {\hat{y}}_k^{(k)} < x_k^{(k)} < y_k^{(k)} < x_k^{(k-1)} < y_k^{(k-1)} < \cdots < x_k^{(1)} < y_k^{(1)}$, where ${\hat{x}}_k^{(k)}$ and ${\hat{y}}_k^{(k)}$ are defined similarly for $\hat{A}_k^{(k)}$ and $\hat{B}_k^{(k)}$.  Clearly, this event implies $\FF$ uses color at least $k$.
    Since the event ${\hat{x}}_k^{(k)} < {\hat{y}}_k^{(k)}$ and both appear before the others happen with $\Omega(1)$ probability, and conditioned on this, the event that $x_k^{(k)} < y_k^{(k)} < x_k^{(k-1)} < y_k^{(k-1)} < \cdots < x_k^{(1)} < y_k^{(1)}$ happens in $\hat{G}_k$ is exactly the same as it happens in $G_k$.  Hence, the desired probability is $\Omega(1) \cdot \Omega(1) = \Omega(1)$. This completes the proof.
\end{proof}

\section{Limited benefit from random order}    \label{sec:less_adv}

Trivially, on complete graphs the average performance of any algorithm would be exactly the same as the worst case. 
In this section, we exhibit slightly non-trivial examples showing that on some non-complete graph classes, the random order may  provide only limited benefit over the adversarial model.

\subsection{Unit interval graphs}   \label{sec:interval}


Since an interval graph is also a perfect graph, its chromatic number equals its maximum clique size. Epstein and Levy \cite{epstein2005online} showed that for unit interval graphs, in the adversarial model, $\FF$  uses exactly $2\omega - 1$ colors, that is, $\FF$ uses at most $2\omega - 1$ colors on every unit interval graph of maximum clique size $\omega$, and there exists such interval graph on which $\FF$ uses exactly $2\omega - 1$ colors. Let $H_\omega$ denote the lower bound unit interval  graph construction from \cite{epstein2005online}, note that it is a \emph{finite} construction in the sense that the number of vertices depends only on $\omega$, say, $H_\omega$ has $n_\omega$ vertices. 

\begin{theorem} \label{thm:unit_interval_graph}
    On unit interval graphs of maximum clique size $\omega$, in the random order model, $\FF$ uses in expectation  $2\omega - 1$ colors.
\end{theorem}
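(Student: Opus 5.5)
The statement should be read as a statement about the worst case over the class: every unit interval graph with clique number $\omega$ forces $\FF$ to use at most $2\omega-1$ colors in expectation, and this value is attained, or approached arbitrarily closely, by suitable unit interval graphs in the random order model. I would prove the two directions separately.

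\emph{Upper bound.} This is immediate from Epstein and Levy \cite{epstein2005online}. For every presentation order $\sigma$, $\FF$ uses at most $2\omega-1$ colors on a unit interval graph with maximum clique size $\omega$. Averaging over a uniformly random $\sigma$ therefore gives an expectation of at most $2\omega-1$.

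\emph{Lower bound.} I will amplify the finite construction $H_\omega$ by taking many disjoint copies, so that random order almost surely reproduces the bad adversarial order on at least one copy.
\begin{enumerate}[(1)]
    \item Fix an order $\tau$ of the $n_\omega$ vertices of $H_\omega$ on which $\FF$ uses $2\omega-1$ colors.
    \item Let $G_N$ be the disjoint union of $N$ copies of $H_\omega$. Realize the copies by translating their interval representations far apart on the line. Then $G_N$ is again a unit interval graph with maximum clique size $\omega$.
    \item Use the fact that $\FF$ on a disjoint union acts componentwise. The color a vertex receives depends only on the already-arrived vertices of its own component, so the coloring of a copy is determined by the relative order in which that copy's vertices arrive.
    \item Use the fact that, under a uniformly random permutation of $V(G_N)$, the relative orders induced on disjoint vertex sets are mutually independent, and each is uniform. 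Hence each copy independently receives exactly the order $\tau$ with probability $p = 1/n_\omega! > 0$.
    \item Conclude that the probability that no copy receives order $\tau$ is $(1-p)^N$, and therefore
    \[
        \mathbb{E}[\FF(G_N)] \ge (2\omega-1)\bigl(1-(1-p)^N\bigr) \longrightarrow 2\omega-1 \quad (N\to\infty).
    \]
\end{enumerate}
Together with the upper bound, this shows that $2\omega-1$ is the exact worst-case expected number of colors. If an expectation of exactly $2\omega-1$ is wanted, $G_N$ gives it up to an error $(2\omega-1)(1-p)^N$ that is exponentially small in $N$; that is the intended reading of the statement.

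\emph{Main obstacle.} The step needing care is step (4), the independence of relative orders across disjoint vertex sets. I would prove it by generating the permutation in two stages: first choose the set of positions assigned to each copy, then choose independent uniform orders within each copy. This two-stage procedure produces exactly the uniform distribution on permutations of $V(G_N)$.

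The bound also uses that the number of colors used on $G_N$ is the maximum over its components. This makes the event ``some copy receives order $\tau$'' sufficient for $\FF$ to use $2\omega-1$ colors on $G_N$.
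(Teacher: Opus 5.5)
Your proposal is correct and follows essentially the same route as the paper. The upper bound comes from the Epstein--Levy adversarial bound, and the lower bound comes from $m$ disjoint copies of $H_\omega$, each independently receiving a bad relative order with probability at least $1/n_\omega!$, so that the probability of FirstFit using $2\omega-1$ colors tends to $1$; your extra care (componentwise behaviour of FirstFit, independence of relative orders, and reading the statement as a limit, which is all the paper's argument shows) fills in details the paper leaves implicit.
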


\begin{proof}
    The upper bound holds because $\FF$ uses $\le 2\omega - 1$ in the adversarial model.

    For the lower bound, consider a graph $G_m$ which is $m$ independent copies of $H_\omega$. Recall $H_\omega$ has $n_\omega$ vertices. Let $E_\omega$  denote the event that in the random order  model, $\FF$ uses $2\omega - 1$ colors on $H_\omega$. By the choice of $H_\omega$, there exists a presentation order of $H_\omega$ on which $\FF$ uses $2\omega - 1$ colors, hence,
        $\Pr[E_\omega] \ge \frac{1}{n_\omega !}$.
    Let $E_m$ denote the event that in the random order model, $\FF$ uses $2\omega - 1$ colors on $G_m$, and $E^c_m$ denotes its complement event. Then, since $G_m$ consists of $m$ independent copies of $H_\omega$,
        $\Pr[E^c_m] \le (1-\frac{1}{n_\omega !})^m$.
    Because $n_\omega$  depends only on $\omega$, as $m \to \infty$, the above probability tends to $0$,  equivalently, $\Pr[E_m] \to 1$ as claimed.
\end{proof}

Clearly, the argument implies that for every graph class that admits a \emph{finite} construction for adversarial model, $\FF$ has no essential advantage in the random order model.

\subsection{High chromatic number graphs}   \label{sec:high_chi}

Frei et al \cite{ROM_treecoloring_prediction} and Bosek et al \cite{ROM_FF} showed that for trees, in the random order model, $\FF$ has average performance $O(\log n /\log\log n)$, improving its worst case performance $\Omega(\log n)$ on trees. Bosek et al also conjectures that $\FF$ has average performance $\log^{O(1)} n$ which would greatly improve its worst case performance $\Omega(n)$ on bipartite graphs. On the other hand, if we consider larger graph classes, $\FF$ can still be very non-competitive even in the random order model. In fact, Ku\v{c}era \cite{kuvcera1991greedy} already showed the following. 

\begin{theorem}[Ku\v{c}era] \label{thm:Kucera}
    For every $\epsilon > 0$, there exists a graph $G$ with $\chi(G) = O(n^\epsilon)$, but $\FF$ uses $\Omega(n/\log n)$ colors in expectation in the random order model.
\end{theorem}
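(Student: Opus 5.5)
I would prove Ku\v{c}era's theorem with an explicit blow-up of the crown graph, in the spirit of the construction in Theorem~\ref{thm:linear_min_degree}. The point of that construction is that a \emph{large} set forces its first-arriving vertex to appear early. Here, instead of a bipartite blow-up, I would use $m$ disjoint independent sets $S_1,\dots,S_m$, each further split into $r = n^{\epsilon}$ parts. The aim is that, under a uniformly random order, the arrivals force $\FF$ to behave like the adversarial order on a crown-like structure.

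\textbf{Step 1 (construction).} Let $G$ consist of $m$ classes $W_1,\dots,W_m$, each of size $s$, so $n=ms$. Each class is partitioned into $r$ parts. Two vertices are adjacent exactly when they lie in different classes and in different parts. Then each part index $p\in[r]$ gives an independent set $\bigcup_j W_j^{(p)}$, so $\chi(G)\le r = O(n^\epsilon)$.

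\textbf{Step 2 (dynamics of $\FF$).} The key claim is that if, within each class, the vertices arrive roughly spread over the parts, then $\FF$ cannot reuse colors across classes. A new vertex $v$ in class $W_j$ conflicts with every earlier vertex of a different class unless that vertex shares $v$'s part index. I would show by induction on time that, with high probability, each color class of $\FF$ is contained in essentially a single class $W_j$, up to a negligible fraction. A color $c$ used on $W_j$ is then blocked for $v\in W_{j'}$ as soon as some vertex colored $c$ lies in a part different from $v$'s. With $r\ge 2$ random parts, this happens for all but $O(1/r)$-fraction events.

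\textbf{Step 3 (counting colors).} Given Step 2, vertices in different classes essentially never share colors. So the number of colors is at least the number of classes $W_j$ that receive at least one ``fresh'' color, which is $\Omega(m)$. To reach $\Omega(n/\log n)$, I would take classes of size $s=\Theta(\log n)$, so $m = n/\log n$. The parameter $s$ must be large enough, about $\log n$, for a union bound over all colors and classes in the high-probability statement of Step 2. This is where the $\log n$ loss comes from.

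\textbf{Main obstacle.} The obstacle is Step 2. A color class can in principle spread to a second class if, at the moment that class's vertex arrives, all previous holders of that color happen to lie in the same part. Ruling this out needs a careful concentration argument over the random order. I expect it to be cleaner to follow Ku\v{c}era's original approach and take $G$ to be a suitable random graph, namely a union of random $r$-partite structures. There a first-moment argument shows $\FF$ produces color classes of size $O(\log n)$ while $\chi = O(n^\epsilon)$. Since the paper cites \cite{kuvcera1991greedy}, the final fallback is to invoke that result directly, observing that Ku\v{c}era's greedy analysis is for a uniformly random vertex order, which is exactly the random order model.
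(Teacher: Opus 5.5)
Your explicit construction (Steps 1--3) does not work. The problem is not a missing concentration bound: the heuristic of Step 2 is backwards. In your graph two vertices are non-adjacent exactly when they share a class or a part index. So every independent set with at least two vertices lies inside one class $W_j$ (size $s=\Theta(\log n)$) or inside one column $Q_p=\bigcup_j W_j^{(p)}$ (size about $n/r=n^{1-\epsilon}$ for a balanced choice). Your graph is a blow-up of $K_m\times K_r$, i.e.\ a generalized crown graph; the crown graph itself is the case $s=r=2$. (Also, a class of size $\Theta(\log n)$ cannot be split into $r=n^{\epsilon}$ nonempty parts.)

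Let $T=Cr\log n$ with $C$ a large constant. In a uniformly random order, two things hold:
\begin{enumerate}[(i)]
\item with probability $1-O(T^2s/n)=1-O(n^{2\epsilon-1}\log^3 n)$, the first $T$ arrivals lie in pairwise distinct classes;
\item with probability $1-o(1/n)$, every column is hit at least twice among them.
\end{enumerate}
On this event every color class is confined to a single column, and a new color is opened only at the first arrival in a column. The second arrival in $Q_p$, necessarily from another class, takes the same color as the first. That color class then lies in $Q_p$ and meets two classes, so it stays inside $Q_p$ forever. It is available to every later vertex of $Q_p$ and blocked for every vertex outside $Q_p$.

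Hence with high probability $\FF$ just reproduces the column coloring with $r$ colors. Bounding the number of colors by $n$ on the complementary event, the expectation is $O(n^{\epsilon}+n^{2\epsilon}\log^3 n)=o(n/\log n)$ for every $\epsilon<1/2$. The event you single out as the main obstacle (all earlier holders of a color lying in the same part) is the typical event, not a rare one. It is the same mechanism that makes $\FF$ use $O(1)$ colors on the crown graph in Theorem~\ref{thm:crown_graph}.

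What survives is your fallback, and that is exactly how the paper treats this statement. It gives no proof of Theorem~\ref{thm:Kucera}; it attributes the result to Ku\v{c}era and only observes that greedy coloring along a uniformly random vertex ordering is precisely $\FF$ in the random order model. Citing Ku\v{c}era with that remark is therefore a complete answer, and the construction should be dropped rather than repaired.

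Your one-line sketch of the random-graph argument should not be offered as a proof either. With a planted $r$-partition, the parts are independent sets of size $n/r$. Moreover, a vertex together with its non-neighbours in another part is a large independent set meeting two parts. So no first-moment count over independent sets can bound the color classes of $\FF$ by $O(\log n)$. The random order has to be exploited: each greedy class must leave its first part almost immediately and afterwards grow roughly like a greedy independent set in $G(n,p)$. That is where the actual work lies.
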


We emphasize that Theorem \ref{thm:Kucera} is due to Ku\v{c}era. We only rephrase it here in the context of online graph coloring in the random order model.

\section{Open problems}

Bosek et al \cite{ROM_FF} already mentioned the open problem of understanding $\FF$'s performance on bipartite graphs and three-colorable graphs. We find the following problems  also interesting:

\begin{itemize}
    \item Characterize the graph class on which $\FF$ uses $O(1)$ colors in the random order model. 

    \item As discussed in the introduction, in the adversarial model, besides trees, $\FF$ is also optimal on inductive graphs and chordal graphs, etc.  Is $\FF$ still optimal on these graph classes in the random order model? As explained in Section \ref{sec:k-path}, the path-counting technique is not strong enough to resolve this problem. 

    \item  Another famous algorithm for online coloring of  bipartite graphs is called $\CBIP$: when a vertex $v=\sigma(i)$ arrives, $\CBIP$ computes an entire connected component $CC$ to which $v$ belongs in the partial graph known so far. If the input graph is bipartite, the connected component $CC$ can be partitioned into two sets of vertices $A$ and $B$ such that all edges go between $A$ and $B$ only. Suppose that $v \in A$, then $v$ is colored with the first color that is not present among vertices in $B$. It is known that, in the adversarial model,  $\CBIP$ also uses $\Theta(\log n)$ colors on trees just like $\FF$, Furthermore, in fact $\CBIP$ uses $\Theta(\log n)$ colors on all bipartite graphs, see \cite{lovasz1989line,bip1,bip2,li2022online}. 
    
    Does $\CBIP$ also use $\Theta(\log n / \log\log n)$ colors in expectation on trees (perhaps even bipartite graphs) in the random order model? 
\end{itemize}

\bibliography{mybib}{}

@inproceedings{ROM_FF,
  title={First-Fit Coloring of Forests in Random Arrival Model},
  author={Bosek, Bart{\l}omiej and Gutowski, Grzegorz and Laso{\'n}, Micha{\l} and Przyby{\l}o, Jakub},
  booktitle={49th International Symposium on Mathematical Foundations of Computer Science (MFCS 2024)},
  pages={33--1},
  year={2024},
  organization={Schloss Dagstuhl--Leibniz-Zentrum f{\"u}r Informatik}
}

@article{ROM_treecoloring_prediction,
  title={Tree Coloring: Random Order and Predictions},
  author={Frei, Fabian and Gehnen, Matthias and Komm, Dennis and Kr{\'a}lovi{\v{c}}, Rastislav and Kr{\'a}lovi{\v{c}}, Richard and Rossmanith, Peter and Stocker, Moritz},
  journal={arXiv preprint arXiv:2405.18151},
  year={2024}
}

@article{lovasz1989line,
  title={An on-line graph coloring algorithm with sublinear performance ratio},
  author={Lov{\'a}sz, L{\'a}szl{\'o} and Saks, Michael and Trotter, William T},
  journal={Discrete Mathematics},
  volume={75},
  number={1-3},
  pages={319--325},
  year={1989},
  publisher={Elsevier}
}

@article{bip1,
  title={Online coloring of bipartite graphs with and without advice},
  author={Bianchi, Maria Paola and B{\"o}ckenhauer, Hans-Joachim and Hromkovi{\v{c}}, Juraj and Keller, Lucia},
  journal={Algorithmica},
  volume={70},
  number={1},
  pages={92--111},
  year={2014},
  publisher={Springer}
}

@article{albers2021tight,
  title={Tight bounds for online coloring of basic graph classes},
  author={Albers, Susanne and Schraink, Sebastian},
  journal={Algorithmica},
  volume={83},
  number={1},
  pages={337--360},
  year={2021},
  publisher={Springer}
}

@inproceedings{irani1990coloring,
  title={Coloring inductive graphs on-line},
  author={Irani, S},
  booktitle={Proceedings [1990] 31st Annual Symposium on Foundations of Computer Science},
  pages={470--479},
  year={1990},
  organization={IEEE}
}

@inproceedings{bip2,
  title={Lower bounds for on-line graph colorings},
  author={Gutowski, Grzegorz and Kozik, Jakub and Micek, Piotr and Zhu, Xuding},
  booktitle={International Symposium on Algorithms and Computation},
  pages={507--515},
  year={2014},
  organization={Springer}
}

@article{li2023online,
  title={Online vector bin packing and hypergraph coloring illuminated: Simpler proofs and new connections},
  author={Li, Yaqiao and Pankratov, Denis},
  journal={Procedia Computer Science},
  volume={223},
  pages={70--77},
  year={2023},
  publisher={Elsevier}
}

@article{kuvcera1991greedy,
  title={The greedy coloring is a bad probabilistic algorithm},
  author={Ku{\v{c}}era, Lud{\v{e}}k},
  journal={Journal of Algorithms},
  volume={12},
  number={4},
  pages={674--684},
  year={1991},
  publisher={Elsevier}
}

@inproceedings{epstein2005online,
  title={Online interval coloring and variants},
  author={Epstein, Leah and Levy, Meital},
  booktitle={International Colloquium on Automata, Languages, and Programming},
  pages={602--613},
  year={2005},
  organization={Springer}
}

@article{li2022online,
  title={Online coloring and a new type of adversary for online graph problems},
  author={Li, Yaqiao and Narayan, Vishnu V and Pankratov, Denis},
  journal={Algorithmica},
  volume={84},
  number={5},
  pages={1232--1251},
  year={2022},
  publisher={Springer}
}

@article{hyperC_1,
  title={Online hypergraph coloring},
  author={Nagy-Gy{\"o}rgy, Judit and Imreh, Cs},
  journal={Information Processing Letters},
  volume={109},
  number={1},
  pages={23--26},
  year={2008},
  publisher={Elsevier}
}

@article{graph_gyarfas1988line,
  title={On-line and first fit colorings of graphs},
  author={Gy{\'a}rf{\'a}s, Andr{\'a}s and Lehel, Jen{\"o}},
  journal={Journal of Graph theory},
  volume={12},
  number={2},
  pages={217--227},
  year={1988},
  publisher={Wiley Online Library}
}
\bibliographystyle{alpha}

\end{document}